\begin{document}
\title{\LARGE \bf
A Novel Two-Layer Codebook Based Near-Field Beam Training for Intelligent Reflecting Surface
\vspace{-0.5em}} 

\author{
\IEEEauthorblockN{Tao Wang\IEEEauthorrefmark{1},
Jie Lv\IEEEauthorrefmark{1},
Haonan Tong\IEEEauthorrefmark{1}, 
Changsheng You\IEEEauthorrefmark{2}, and
Changchuan Yin\IEEEauthorrefmark{1},~\IEEEmembership{Senior Member IEEE}}
\IEEEauthorblockA{\IEEEauthorrefmark{1}School of Information and Communication Engineering, Beijing University of Posts and Telecommunications, Beijing, China}
\IEEEauthorblockA{\IEEEauthorrefmark{2}Department of Electrical and Electronic Engineering, Southern University of Science and Technology, Shenzhen, China}
\IEEEauthorblockA{Emails:$^*$\{taowang, lvj, hntong, ccyin\}@bupt.edu.cn, $\dag$youcs@sustech.edu.cn}
\vspace{-1.5em}
}
\maketitle
\thispagestyle{empty}
\pagestyle{empty}

\begin{abstract}
In this paper, we study the codebook-based near-field beam training for intelligent reflecting surfaces (IRSs) aided wireless system. In the considered model, the near-field beam training is critical to focus signals at the location of user equipment (UE) to obtain prominent IRS array gain. However, existing codebook schemes cannot achieve low training overhead and high receiving power simultaneously. 
To tackle this issue, a novel two-layer codebook based beam training scheme is proposed. The layer-1 codebook is designed based on the omnidirectionality of a random-phase beam pattern, which estimates the UE distance with training overhead equivalent to that of one DFT codeword. Then, based on the estimated UE distance, the layer-2 codebook is generated to scan candidate UE locations and obtain the optimal codeword for IRS beamforming. Numerical results show that compared with benchmarks, the proposed two-layer beam training scheme achieves more accurate UE distance and angle estimation, higher data rate, and smaller training overhead.

\end{abstract}

\begin{IEEEkeywords}
Intelligent Reflecting Surface, near-field, codebook based beam training.
\end{IEEEkeywords}

\section{Introduction}
Intelligent reflecting surfaces (IRSs) are expected to be a key technology in next-generation wireless networks due to their potential to reconfigure the wireless environment in a cost-effective manner\cite{c8, c4, csyou_channel-estimation-passive-beamforming}. 
However, due to the nature in terms of signal reflection, large-scale IRS is required to capture enough energy and achieve significant beamforming gain \cite{c3}. In practice, to enhance the local coverage, the large-scale IRS can be placed near the user equipment (UE), while establishing a reliable line-of-sight (LoS) dominant link with the base station (BS)\cite{intelligent-communication-environments-enabled-by-Ultra-Massive-MIMO}. This thus leads to a fundamental paradigm shift from the far-field IRS-aided wireless communications to the near-field counterpart.

Beam training in near-field scenarios poses several new challenges, such as the degraded training accuracy and the drastic increase of training overhead, which have been addressed in the literature \cite{near-field-V1,  spherical-Wave-Channel, Channel-Estimation-for-XL-MIMO, Fast-Near-Field-Beam-Training-IRS}. Specifically, the near-field channel is studied in \cite{spherical-Wave-Channel}, where a polar-domain (angle plus distance) channel model was proposed, showing that near-field beamforming should focus signal energy at the UE location. Traditional far-field beam training, such as the discrete Fourier transform~(DFT) codebook based beam training, only tries to match the angle of the UE, thus resulting in a significant decrease of beamforming gain in the near-field scenarios. To tackle this issue, our previous work in \cite{near-field-V1} intuitively added a distance-domain beam training after the DFT beam training, which outperforms the DFT codebook scheme. Furthermore, a polar-domain codebook dedicated to the near-field channel was proposed in \cite{Channel-Estimation-for-XL-MIMO}, where the polar-domain channel space is sampled based on compressed sensing theory, and an exhaustive beam search strategy is adopted. To reduce the prohibitively high beam training overhead as that in \cite{Channel-Estimation-for-XL-MIMO}, a two-phase beam training method was proposed in \cite{Fast-Near-Field-Beam-Training-IRS}, where candidate UE angles are obtained from the DFT codebook based beam training in the first phase, and a shortlisted polar-domain codeword from \cite{Channel-Estimation-for-XL-MIMO} is adopted to estimate the UE distance in the second phase.

However, we observe that near-field beam patterns under far-field beamforming are broadened in the near-field region, resulting in inaccurate DFT codebook based angle estimation to be insufficient, and hence performance degradation of the current near-field IRS codebook based beam training schemes \cite{near-field-V1, Fast-Near-Field-Beam-Training-IRS}. 
Motivated by this observation, we propose a novel two-layer codebook to achieve both low training overhead and high IRS array gain. The key contributions of this paper are summarized as follows. 
\begin{itemize}
\item We elaborate on the problem caused by using far-field beamforming in the near-field region, based on the considered model of IRS-assisted wireless systems.
\item To achieve low training overhead and high IRS array gain, we propose a novel two-layer codebook based beam training scheme. The layer-1 codebook is composed of independent random-phase vectors, which generates an omnidirectional IRS beam for the estimation of UE distance. The layer-2 codebook is generated based on the estimated UE distance from the first layer, which searches the candidate UE locations and finally obtains the optimal codeword for IRS beamforming.
\item We provide theoretical analysis and numerical simulations for the proposed beam training scheme. The numerical results show that, compared to benchmarks, the proposed scheme provides more accurate estimations of the distances and angles of the UE, achieving higher data rates with a smaller training overhead.
\end{itemize}

The rest of the paper is organized as follows. In Section II, we illustrate the system model. In Section III, through the elaboration of near-field beam patterns, we provide insights of near-field beam training. In Section IV, we propose a novel two-layer codebook based near-field beam training scheme, which is theoretically analysed. In Section V, extensive numerical results are provided. Finally, we draw conclusions in Section VI.

	\begin{figure}[ht]
	\centering
	\includegraphics[scale = 0.7]{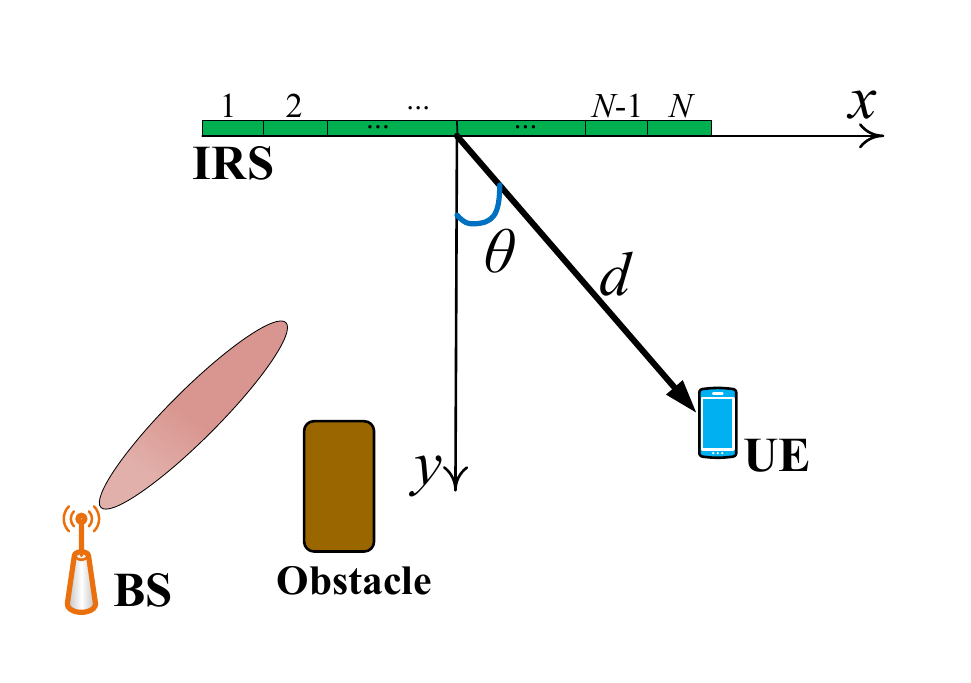}
 \vspace{-1em}
	\caption{An IRS-assisted wireless system.}
	\label{fig:system}
	\vspace{-0.5em}
	\end{figure}
\vspace{1em}

\section{System Model}
We consider the downlink beam training of an IRS-assisted narrow-band wireless system, as shown in Fig. \ref{fig:system}. The system consists of an $M$-antenna BS, a single-antenna UE, and an $N$-element uniform linear array (ULA) IRS with $\lambda/2$ spacing, where $\lambda$ is the carrier wavelength\footnote{The ULA IRS is used in this paper for the simplicity of analysis. However, the proposed beam training scheme can be readily extended to Uniform Planar Array~(UPA) IRS by also using independent random-phase beamforming vectors for the proposed layer-1 codebook, and using the spatial angles of the two dimensional DFT codebook as the candidate UE angles to generate the proposed layer-2 codebook. The details are illustrated in Section IV.}. Let $D$ denotes the length of the IRS. We assume that the UE is located in the near-field region of the IRS, and there only exists a virtual LoS BS-IRS-UE link due to the occlusion of the direct BS-UE link. The IRS is placed on the $x$-axis with its center at the origin. Specifically, the location of the $n$-th IRS element is $\big((n-\frac{N+1}{2})\frac{\lambda}{2}, 0\big)$, where $N$ is assumed to be an even number. The UE's location is $(d\sin{\theta}, d\cos{\theta})$, where $\theta$ and $d$ represent the angle and distance of the UE from the origin, respectively. Let $\mathbf{Euc}(\cdot,\cdot)$ denotes the Euclidean distance between two coordinates. The distance between the $n$-th IRS element and the UE, denoted by $d_n$, is given by:
\begin{equation}
 \label{euc_distance}
 \begin{split}
 d_n(d,\theta) &= \mathbf{Euc}\left(\big((n-\frac{N+1}{2})\frac{\lambda}{2}, 0\big),(d\sin{\theta}, d\cos{\theta})\right).
 \end{split}
\end{equation}

\underline{Channel model:} We consider the propagation environment with limited scattering, which is typical for mmWave channels. Since the BS and IRS are at fixed locations once deployed, the BS-IRS link can be considered
as quasi-static. As such, we assume for simplicity that the BS has aligned its transmit beamforming with the BS-IRS LoS channel and thus can be treated as having an equivalent single antenna\cite{multi-beam-training-IRS}\footnote{For other BS-IRS channel settings, the design of the BS precoding during IRS beam training is an interesting topic but is left for our future work.}. Then, the effective BS-IRS channel, denoted by $\boldsymbol{f} \in \mathbb{C}^{N\times1}$, can be modeled as $\boldsymbol{f} = f\left[ e^{j\omega_1}, e^{j\omega_2}, \cdots, e^{j \omega_N} \right]^{T}$, where $f$ denotes the complex-valued path gain of the BS-IRS
link; $\omega_n$ denotes the phase of channel between the BS and the $n$-IRS element; $j$ denotes the imaginary unit. Let
$\boldsymbol{h^H}\in \mathbb{C}^{N\times 1}$ denote the near-field IRS-UE channel, which can be modeled as \cite{spherical-Wave-Channel}
  \begin{equation}
 \label{h-model}
 \begin{split}
 [\boldsymbol{h^H}]_{n}  = \sqrt{\frac{G^{\rm U} A^{\rm U} G^{\rm I}}{4\pi d_n^2}}e^{-j\frac{2\pi d_n}{\lambda}},
 \end{split}
 \end{equation}
where $G^{\rm U}$, $A^{\rm U}$, and $G^{\rm I}$ denote the receiving gain of the UE antenna, the effective aperture of the UE antenna, and the gain of each IRS element, respectively.

Let $\boldsymbol{\Phi}\triangleq diag( e^{j\varphi_1}, \cdots, e^{j\varphi_N})$ denotes the adjustable phase shifts introduced by the IRS. The received signal at the UE can be represented as
\begin{equation}
 \label{received_signal}
 \begin{split}
 y &= \boldsymbol{h}^{H}\boldsymbol{\Phi}\boldsymbol{f}s+z,
 \end{split}
 \end{equation}
where $s$ denotes the symbol transmitted by the BS with power $P_A$, $z \sim{\mathcal{C}\mathcal{N}(0, \sigma^2)}$ denotes the received Additive Gaussian white noise~(AWGN) with zero mean and variance $\sigma^2$.
Substituting (\ref{h-model}) into (\ref{received_signal}) yields the received signal power represented as
 \begin{equation}
 \label{received-power}
 \begin{aligned}
  P_{\rm r} =& ~\frac{P_{\rm A}G^{\rm U} A^{\rm U} G^{\rm I}f^2}{4\pi}\Big|\sum_{n=1}^{N}\frac{1}{ d_n}e^{j(\varphi_n+\omega_n+\frac{2\pi d_n}{\lambda})}\Big|^2.
  \end{aligned}
 \end{equation}
The achievable rate in bits/second/Hz (bps/Hz) can be obtained as
 \begin{equation}
 \label{achievable-rate}
 \begin{aligned}
  R =& \log_2(1+\frac{P_{\rm r}}{\sigma^2}).
  \end{aligned}
 \end{equation}

\section{Near-Field Beam Patterns}

It can be easily obtain from (\ref{received-power}) that the optimal near-field IRS beamforming is \cite{c3}:
\begin{equation}
 \label{IRS_phase_config}
 \begin{split}
 \varphi^{\rm {near}}_n(\theta,d)= -\omega_n - \frac{2\pi d_n}{\lambda},~~n = 1, \cdots, N,
 \end{split}
\end{equation}
where the in-phase superposition of signals from all IRS elements is realized at the location of UE and the maximal received power can be obtained as $P_{\rm max} = P_{\rm A}f^2\left|\sum_{n=1}^{N}\sqrt{\frac{G^{\rm U} A^{\rm U} G^{\rm I}}{4\pi d_n^2}}\right|^2$. Since $\omega_n$ is known, the optimal $\boldsymbol{\Phi}$ is only determined by $d_n$, and thus determined by $\theta$ and $d$ according to (\ref{euc_distance}). Therefore, the location of the UE should be obtained to perform the optimal near-field IRS beamforming.
By contrast, the conventional far-field IRS beamforming only based on UE angle is:
\begin{equation}
 \label{IRS_far_field_phase_config}
 \begin{split}
 \varphi^{\rm {far}}_n(\theta)= -\omega_n-\pi(n-1)\sin{\theta},~~n = 1, \cdots, N.
 \end{split}
\end{equation}

Comparing (\ref{IRS_phase_config}) and (\ref{IRS_far_field_phase_config}), it is observed that the far-field beamforming induces phase errors in the near-field channel condition since it does not take into account the UE distance, which inevitably leads to the decrease of the IRS array gain in the near-field region. The closed-form expression of the array gain loss caused by the far-field beamforming in the near-field region has been derived in our previous work\cite{near-field-V1}. To clearly show the difference between the near-field beamforming and far-field beamforming in the near-field region, we make the following definition of the near-field beam pattern.
\newtheorem{definition}{Definition}

\begin{definition}
The near-field beam pattern is defined as the signal power measured on the circle around the IRS center and normalized by the achievable maximal signal power, where the radius of the circle, $d_{\rm p}$, is less than the IRS's Rayleigh distance, $2D^2/\lambda$. 
\end{definition}

Similar to the derivation of (\ref{received-power}), the normalized near-field beam pattern is represented as:
\begin{equation}
\label{measured_signal_power}
\begin{split}
&P_{\rm p}(\varphi_n) = \frac{\Big|\sum_{n=1}^{N}\frac{1}{d^{'}_n}e^{j(\varphi_n+\omega_n+\frac{2\pi d^{'}_n}{\lambda})}\Big|^2}{\Big|\sum_{n=1}^{N}\frac{1}{ d^{'}_n}\Big|^2}, \theta_{\rm p}\in [-\frac{\pi}{2}, \frac{\pi}{2}),
\end{split}
\end{equation}
where $d^{'}_n \triangleq \mathbf{Euc}\left(\big((n-\frac{N+1}{2})\frac{\lambda}{2}, 0\big),(d_{\rm p}\sin{\theta_{\rm p}}, d_{\rm p}\cos{\theta_{\rm p}})\right)$ denotes the distance between the $n$-th IRS element and the measuring point with coordinate $(d_{\rm p}\sin{\theta_{\rm p}}, d_{\rm p}\cos{\theta_{\rm p}})$. By substituting $\varphi_n$ in (\ref{measured_signal_power}) with $\varphi^{\rm {near}}_n(\theta,d)$ and $\varphi^{\rm {far}}_n(\theta)$, respectively, the near-field beam patterns with $d_{\rm p} = d$ under near-field and far-field beamforming can be obtained, which are shown in Fig. \ref{fig:beam_patterns} with $\theta = 0^\circ,~ 20^\circ,~ 40^\circ,~ 60^\circ$ and $d =$ 6 m, under the system setup with $D = 1.28$m, $N = 256$, and $\lambda = 0.01$m.
\begin{figure}[ht]
    \vspace{-1em}
    \centering
    \includegraphics[scale = 0.6]{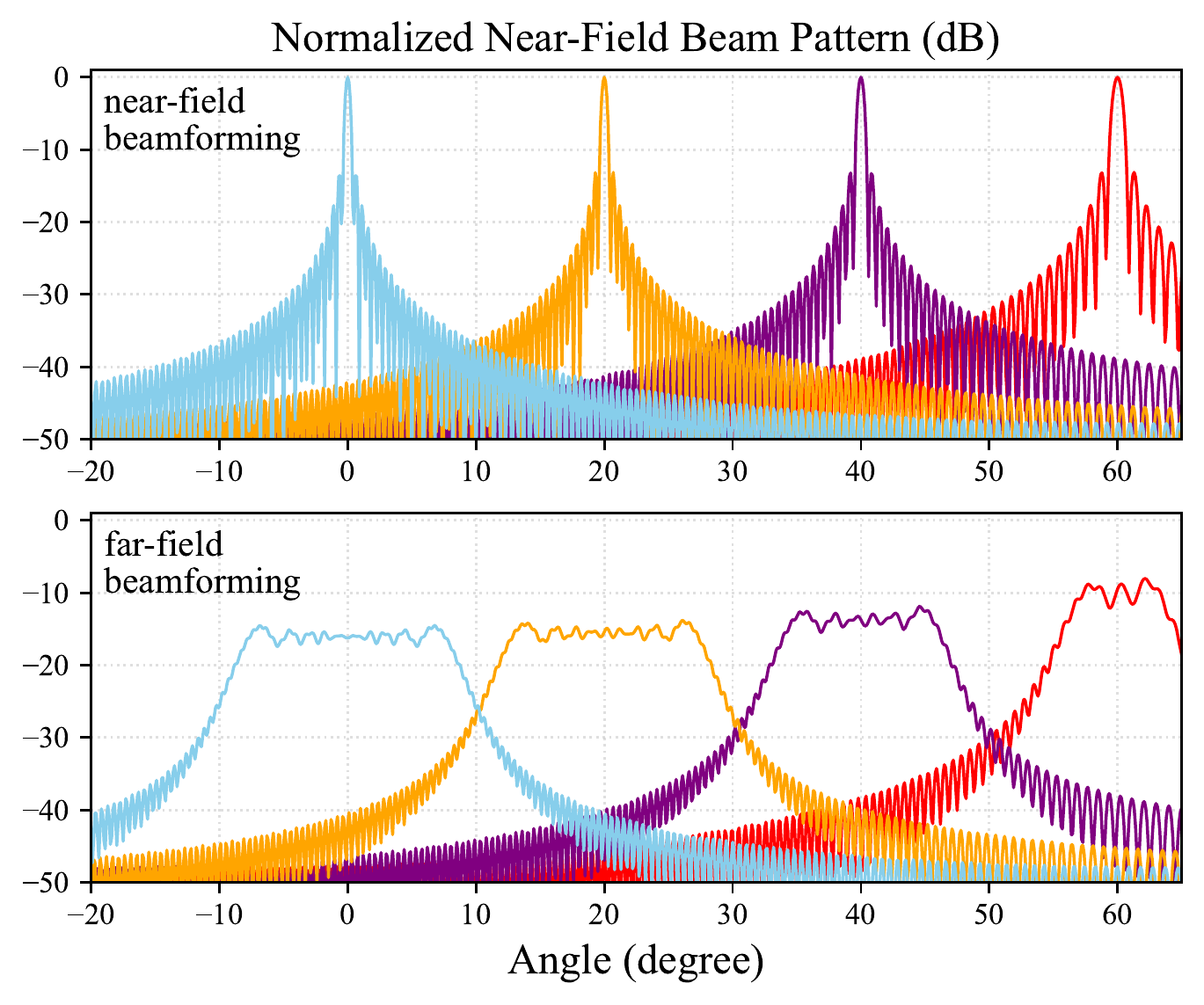}
    \vspace{-1em}
    \caption{Normalized near-field beam patterns under far-field and near-field beamforming.}
    \label{fig:beam_patterns}
\end{figure}

\noindent\textbf{Observation 1}~(beam broadening effect)\textbf{.} In Fig. \ref{fig:beam_patterns}, the near-field beam patterns under far-field beamforming get broadened and lose the single peak characteristic, which is called the beam broadening effect in this paper. In this case, the accuracy of the angle estimation under far-field beam sweeping \cite{near-field-V1, Fast-Near-Field-Beam-Training-IRS} cannot be guaranteed, resulting in performance degradation, as shown in the numerical results of Section IV. By contrast, the near-field beam patterns under near-field beamforming maintain the single peak characteristic, which reveals that if the UE distance $d$ is given, the UE angle $\theta$ can be accurately estimated through the near-field beam sweeping with focuses on the circle with radius $d$.
Inspired by this observation, a novel two-layer codebook based beam training scheme is proposed to estimate $d$ and $\theta$ successively. This scheme can accurately estimate the angle of the UE through near-field beam sweeping, which is not possible under far-field beam sweeping due to the beam broadening effect.

\section{Proposed Two-Layer Near-Field Beam Training}
In this section, we propose a novel two-layer codebook based IRS beam training scheme, which estimates $d$ and $\theta$ of UE successively.
To estimate $d$, we make an approximation of equation (\ref{h-model}) as follows:
\begin{equation}   
 \label{h-approximation}
 \begin{split}
 [\boldsymbol{h}]_{n}  \approx 
 \underbrace{\sqrt{\frac{G^{\rm U} A^{\rm U} G^{\rm I}}{4\pi d^2}}}_{b}e^{j\frac{2\pi d_n}{\lambda}},
 \end{split}
 \end{equation}
 where $d_n, n = 1, \cdots, N$ are approximated as $d$ to calculate the amplitude $b$.
 To determine when this approximation holds, we compute the distance differential from the IRS elements to the UE between the center and the edge of the IRS. For example, suppose $\theta = 0^{\circ}$ (the analysis can be generalized to arbitrary UE angles). With the equal-distance approximation in (\ref{h-approximation}), the signal from each IRS element travels the same distance, $d$, to the UE. But at the edge of the IRS, it has actually traveled the distance $\sqrt{d^2 + D^2/4}$. The error ratio of the approximation is $\frac{d-\sqrt{d^2 + D^2/4}}{\sqrt{d^2 + D^2/4}}$. With $D$ = 1.5 m, $d$ = 5 m, the ratio is about 1.14\%, which has a minor impact. With (\ref{h-approximation}), equation (\ref{received-power}) can be approximated as
 \begin{equation}   
 \label{received-power-approximation}
 \begin{aligned}
  P_{\rm r} \approx P_{\rm A}f^2b^2 \Big|\sum_{n=1}^{N}e^{j(\varphi_n+\omega_n+\frac{2\pi d_n}{\lambda})}\Big|^2.
  \end{aligned}
 \end{equation}

\subsection{Layer-1 Codebook}
The proposed layer-1 codebook is designed to estimate $d$ with the information of the received power. Equation (\ref{received-power-approximation}) shows that $P_{\rm r}$ is determined by the UE location and the IRS beamforming together, which are coupled to determine the phase of the received signal. Therefore, there is no independent functional relationship between $P_{\rm r}$ and $d$. To tackle this issue, we propose to eliminate the influence of the UE location on the phase of the received signal by applying random phase codeword based beamforming to generate an omnidirectional IRS beam across the near-field region. 

Specifically, the proposed layer-1 codebook is composed of $C$ random-phase IRS beamforming vectors, which is denoted as:
\begin{equation}   
 \label{1_layer_codebook}
 \begin{split}
\left[ e^{j\mu_{c,1}}, e^{j\mu_{c,2}}, \cdots,e^{j\mu_{c,N}}\right],~~c = 1,\cdots,C,
 \end{split}
\end{equation}
where $\mu_{c,n}\overset{i.i.d}{\sim}{\mathcal{U}(0,2\pi)},~\forall~n\in \{1, 2, \cdots, N\},~c \in \{1, 2, \cdots, C\}$.
These $C$ beamforming vectors are designed to be executed for a limited period of time, which generates a cumulative omnidirectional IRS beam for the estimation of UE distance, which is elaborated on below.

According to (\ref{received-power-approximation}), the average received signal power of UE, corresponding to the proposed layer-1 codebook, is given by 
\vspace{-0.5em}
\begin{equation}   
 \label{UE_received_power}
 \begin{split}
 \Bar{P_{\rm r}} &= \frac{P_{\rm A}f^2b^2}{C}\sum_{c=1}^{C}
\underbrace{\Big|\sum_{n=1}^{N}e^{j(\overbrace{\mu_{c,n}+\omega_n+\frac{2\pi d_n}{\lambda}}^{\zeta_{c,n}})}\Big|^2}_{Q_c},
 \end{split}
\end{equation}
where $Q_c$ denotes the received signal power divided by $P_{\rm A}f^2b^2$ under the $c$-th codeword. $\zeta_{c,n}$ denotes the phase of the UE's received signal from the ${n}$-th IRS element under the $c$-th codeword.

\newtheorem{lemma}{Lemma}[]
\begin{lemma}[\textbf{Omnidirectivity of random-phase IRS beam}]
\label{lemma1}
For any UE location $(\theta, d)$, when ${C\to+\infty}$,
\begin{equation} 
 \label{omi_distribution}
\Bar{P_{\rm r}}\sim \mathcal{N}\Big(NP_{\rm A}f^2b^2, {N(N-1)P_{\rm A}^2f^4b^4}/{C}\Big),
\end{equation} 
and the unbiased estimation of $d$ can be expressed as
\begin{equation}   
 \label{estimation_of_l}
 \begin{split}
 \hat{d} &= \sqrt{\frac{G^{\rm U} A^{\rm U}G^{\rm I}NP_{\rm A}f^2}{4\pi \Bar{P_{\rm r}}}}.
 \end{split}
\end{equation}
\end{lemma}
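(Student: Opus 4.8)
The plan is to treat $Q_c=\big|\sum_{n=1}^N e^{j\zeta_{c,n}}\big|^2$ as the squared length of a random walk of $N$ unit phasors, exploiting the fact that the deterministic offset $\omega_n+2\pi d_n/\lambda$ is irrelevant. Since $\mu_{c,n}\sim\mathcal{U}(0,2\pi)$ is uniform on the circle and independent across $n$ and $c$, adding the fixed phase $\omega_n+2\pi d_n/\lambda$ leaves $\zeta_{c,n}$ still uniform on $[0,2\pi)$ and i.i.d.; in particular $\mathbb{E}[e^{j\zeta_{c,n}}]=0$, so the UE location drops out of all moments. I would then (i) compute the first two moments of a single $Q_c$, (ii) invoke the classical CLT for the sample mean $\frac1C\sum_c Q_c$, and (iii) invert the resulting mean relation to obtain $\hat d$.

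For the moments, I expand $Q_c=\sum_{n,m}e^{j(\zeta_{c,n}-\zeta_{c,m})}$; the diagonal terms $n=m$ contribute $N$ and each off-diagonal term has zero expectation by independence, so $\mathbb{E}[Q_c]=N$, which already yields $\mathbb{E}[\Bar{P_{\rm r}}]=NP_{\rm A}f^2b^2$. The variance requires the fourth moment: writing $Q_c^2=\sum_{n,m,p,q}e^{j(\zeta_{c,n}-\zeta_{c,m}+\zeta_{c,p}-\zeta_{c,q})}$, a term survives the expectation only when the net coefficient of every distinct phase index vanishes, i.e. when the multiset $\{n,p\}$ equals the multiset $\{m,q\}$. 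Counting these quadruples by inclusion–exclusion — the events $\{n=m,\,p=q\}$ and $\{n=q,\,p=m\}$ each contain $N^2$ tuples and overlap only in the $N$ fully-diagonal tuples where all four indices coincide — gives $\mathbb{E}[Q_c^2]=2N^2-N$, hence $\mathrm{Var}(Q_c)=2N^2-N-N^2=N(N-1)$. This fourth-moment bookkeeping is the main obstacle and the only genuinely delicate step.

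Finally, because the $C$ random beamforming vectors are drawn independently, the $Q_c$ are i.i.d. with mean $N$ and variance $N(N-1)$, so $\frac1C\sum_c Q_c$ is a sample mean; the Lindeberg–Lévy CLT gives $\frac1C\sum_c Q_c\to\mathcal{N}\!\big(N,\,N(N-1)/C\big)$, and scaling by the deterministic constant $P_{\rm A}f^2b^2$ produces exactly (\ref{omi_distribution}). For the estimator I would substitute $b^2=G^{\rm U}A^{\rm U}G^{\rm I}/(4\pi d^2)$ into $\mathbb{E}[\Bar{P_{\rm r}}]=NP_{\rm A}f^2b^2$ and solve for $d$, which reproduces (\ref{estimation_of_l}) after replacing the true mean by the measured $\Bar{P_{\rm r}}$. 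The subtlety worth flagging is that $\hat d$ is a nonlinear (reciprocal square-root) function of $\Bar{P_{\rm r}}$, so strict finite-$C$ unbiasedness fails by a Jensen gap of order $1/C$; the claim is best read as asymptotic unbiasedness, which follows because $\Bar{P_{\rm r}}$ is an exactly unbiased estimate of the mean power whose variance vanishes as $C\to\infty$, so that $\hat d\to d$ by the continuous mapping theorem.
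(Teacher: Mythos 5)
Your proposal is correct, and it follows the same overall skeleton as the paper's proof: fix $(\theta,d)$ so that $\omega_n + 2\pi d_n/\lambda$ is a deterministic offset, note that $\zeta_{c,n}$ remains i.i.d.\ uniform on $[0,2\pi)$, compute the mean and variance of $Q_c$, invoke the CLT for the sample mean over $c$, and invert the mean relation to get $\hat d$. The one genuinely different step is the variance computation, and yours is arguably the better one. The paper writes $Q_c = N + 2\sum_{1\le i<j\le N}\cos(\zeta_{c,i}-\zeta_{c,j})$ and asserts that these $N(N-1)/2$ cosine terms are \emph{independent} and identically distributed with mean $0$ and variance $1/2$; that independence claim is actually false (e.g., $\zeta_{c,1}-\zeta_{c,2}$, $\zeta_{c,2}-\zeta_{c,3}$, $\zeta_{c,1}-\zeta_{c,3}$ satisfy a linear constraint), although the conclusion $\mathrm{Var}(Q_c)=N(N-1)$ survives because the terms are pairwise uncorrelated, which is all the variance-sum formula requires. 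Your complex fourth-moment bookkeeping --- counting quadruples with multiset $\{n,p\}=\{m,q\}$ by inclusion--exclusion to get $\mathbb{E}[Q_c^2]=2N^2-N$ --- sidesteps any independence claim about derived quantities and rests only on the i.i.d.\ uniformity of the $\zeta_{c,n}$, so it is the more rigorous route to the same number. You also add a precision the paper lacks: since $\hat d$ is a nonlinear (reciprocal square-root) function of $\Bar{P_{\rm r}}$, strict finite-$C$ unbiasedness fails by a Jensen gap of order $1/C$, and the lemma's ``unbiased'' claim should be read as consistency/asymptotic unbiasedness as $C\to\infty$; the paper states unbiasedness without qualification. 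In short: same architecture, but your variance argument repairs a subtle flaw in the paper's, and your reading of the estimator claim is the defensible one.
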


\begin{proof}
For a given $(\theta, d)$, $d_n$ and $\omega_n$ are fixed due to the fixed deployment of the BS and IRS. According to (\ref{UE_received_power}), we have
\begin{equation}   
 \label{Q_c}
 \begin{split}
Q_c &= \Big|\sum_{n=1}^{N} e^{j\zeta_{c,n}}\Big|^2\\
&= \Big|\sum_{n=1}^{N}(\cos{\zeta_{c,n}} +i\sin{\zeta_{c,n}})\Big|^2\\
&= \Big(\sum_{n=1}^{N}\cos{\zeta_{c,n}}\Big)^2 +\Big(\sum_{n=1}^{N}\sin{\zeta_{c,n}}\Big)^2\\
&= N+
2\Big(\underbrace{
\sum_{1\leq i < j\leq N} \cos{(\zeta_{c,i}-\zeta_{c,j})}
}_{N(N-1)/2~items}\Big).
 \end{split}
\end{equation}
Since $\mu_{c,n}\overset{i.i.d}{\sim}{\mathcal{U}(0,2\pi)}$ and signal phase has a period of $2\pi$, we have $\zeta_{c,n}\overset{i.i.d}{\sim}{\mathcal{U}(0,2\pi)}$, and $(\zeta_{c,i}-\zeta_{c,j}) \overset{i.i.d}{\sim}{\mathcal{U}(0,2\pi)}, i\neq j$. Therefore, $\cos{(\zeta_{c,i}-\zeta_{c,j})},~ \forall~1\leq i < j\leq N,~c \in \{1, 2, \cdots, C\}$ are independent and identically distributed, with mean 0 and variance 0.5. So the mean and variance of $P_{\rm A}f^2b^2Q_c$ are $NP_{\rm A}f^2b^2$ and $N(N-1)P_{\rm A}^2f^4b^4$, respectively. 
Further, by applying the central limit law, (\ref{omi_distribution}) can be obtained when ${C\to+\infty}$, which proves the omnidirectional property of the proposed layer-1 codebook. The unbiased estimation of $d$ can be derived by substituting (\ref{h-approximation}) into (\ref{omi_distribution}), which yields (\ref{estimation_of_l}).
\end{proof}

Now we clearly show the omnidirectivity of the proposed random-phase beam pattern. Similar to (\ref{measured_signal_power}), the near-field beam patterns generated by the layer-1 codebook is represented as
\begin{equation}   
\label{layer-1-beam-pattern}
\begin{split}
\Bar{P}_{\rm p}(\mu_{c,n}) &= 
\frac{\frac{1}{C}\sum_{c=1}^{C}\Big|\sum_{n=1}^{N}\frac{1}{d_n^{'}}
e^{j(\mu_{c,n}+\omega_n+\frac{2\pi d_n^{'}}{\lambda})}\Big|^2}{\Big|\sum_{n=1}^{N}\frac{1}{d_n^{'}}\Big|^2},
\end{split}
\end{equation}
which is shown in Fig. \ref{fig:random_phase_beam_patterns} with $d = 6$~m, $C =$ 100, 500, 2000, respectively. The system setup is the same as that in Fig. \ref{fig:beam_patterns}. The theoretical random-phase beam pattern for ${C\to+\infty}$ is also illustrated. Fig. \ref{fig:random_phase_beam_patterns} shows that as $C$ increases, the mean value of the beam patterns remains the same, while the fluctuation of the beam patterns diminishes, which is in accordance with the conclusion of Lemma \ref{lemma1}.

 \begin{figure}[ht]
	\centering
	\includegraphics[scale = 0.5]{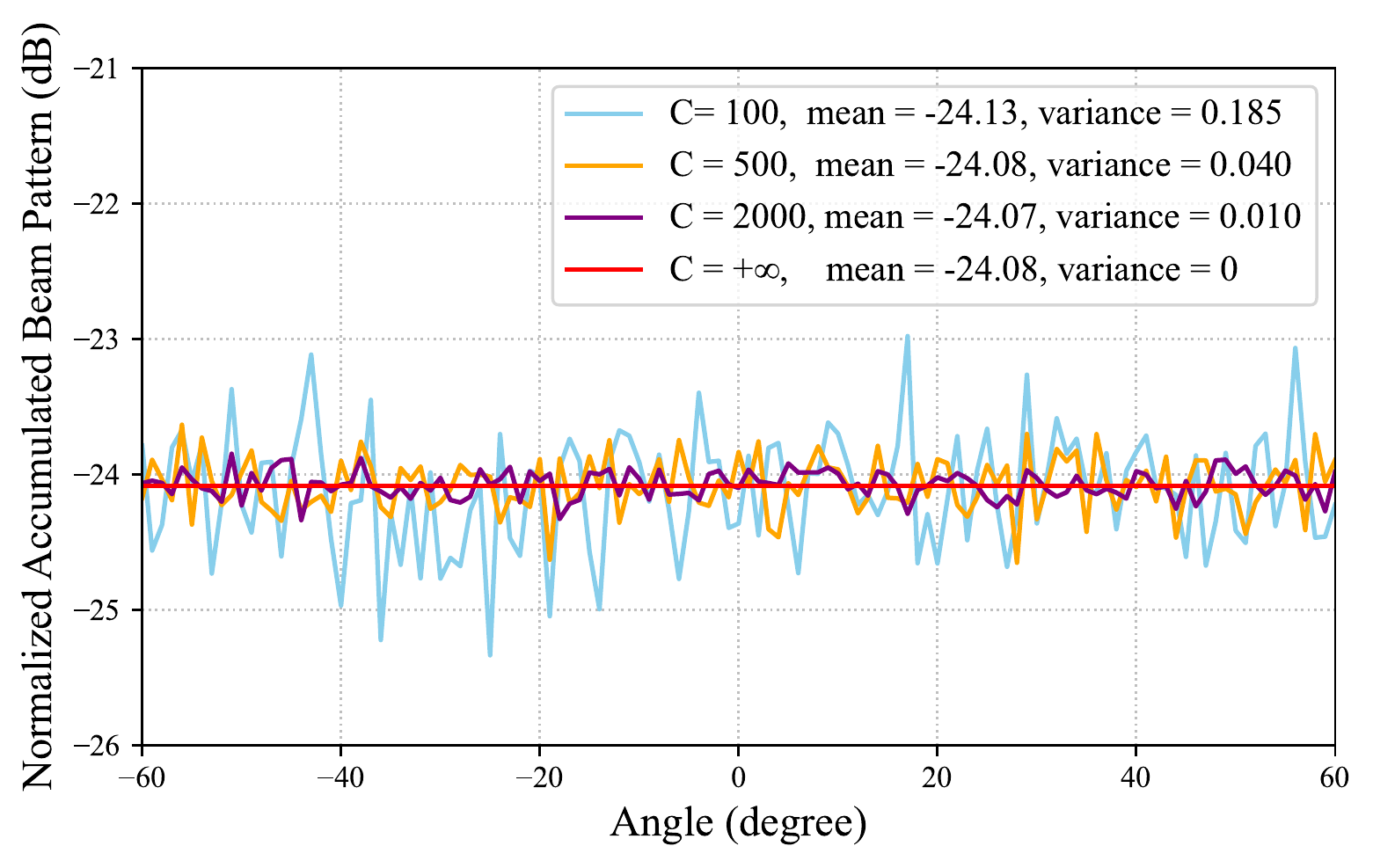}
	\caption{Normalized random-phase beam patterns.}
	\vspace{-1em}
	\label{fig:random_phase_beam_patterns}
\end{figure}

\newtheorem{remark}{Remark}
\begin{remark}[Implementation and Overhead of The Proposed Layer-1 Codebook]
\label{remark1}
\rm{Since only 1 omnidirectional IRS beam is generated and used in the proposed layer-1 codebook. In practice, the $C$ beamforming vectors in the layer-1 codebook can be pre-stored in the IRS controller, such as a Field Programmable Gate Array~(FPGA) controller, and performed within the time period of 1 traditional codeword, such as a DFT codeword, by means of phase switching at the nanosecond level\footnote{Based on the existing 5G standards\cite{3GPP.38.211}, one traditional codeword is executed during a synchronization signal and physical broadcasting channel block~(SSB), which spans 4 symbols and contains 960 resource elements~(REs) in total. Therefore, under the sampling rate required in the current wireless communication systems, the UE can successfully collect all the $C$, e.g., 500, changes of IRS beam pattern of the proposed layer-1 codebook.
In addition, in order to execute the layer-1 codebook~(e.g., $C = 500$) within one SSB~(e.g., sub-carrier of 60~kHz, which corresponds to the SSB period of 66.67 us), IRS hardware needs to achieve a phase switching cycle of 133.33 ns, which is achievable by existing FPGA controllers\cite{FPGA} and diodes\cite{diode}.}. 
Therefore, the training overhead of the layer-1 codebook is equivalent to that of only one DFT codeword. Then, the UE measures the average received power during this SSB period for the distance estimation of (\ref{estimation_of_l}). Equation
(\ref{omi_distribution}) shows that the variance of $\Bar{P_{\rm r}}$ is inversely proportional to $C$. Therefore, $C$ can be increased to enhance the omnidirectivity of the random-phase IRS beam and improve the accuracy of $\hat{d}$.}
\end{remark}


\subsection{Layer-2 Codebook}
The layer-2 codebook is designed to search for the UE angle, $\theta$, based on the estimated UE distance $\hat{d}$ in Section III-A. According to the compressed sensing-based analysis in \cite{Channel-Estimation-for-XL-MIMO}, the spatial angles corresponding to the DFT codebook are adopted as the candidate UE angles, which are represented as
\begin{equation} 
 \label{DFT-angles}
\theta_m = \arcsin{\frac{2m-N-1}{N}},~m = 1, 2, \dots, N.
\end{equation} 
The candidate UE locations are denoted as $(\theta_m, \hat{d}),~m = 1, 2, \dots, N$. The layer-2 codebook, composed of $N$ codewords, is generated by substituting $(\theta_m, \hat{d})$ into (\ref{euc_distance}) and (\ref{IRS_phase_config}), thus yielding
\begin{equation} 
 \label{layer-2-codebook}
 \begin{split}
      \left[ e^{-j(\omega_1 + \frac{2\pi d_1(\theta_m,\hat{d})}{\lambda})}, \cdots,e^{-j(\omega_N + \frac{2\pi d_n(\theta_m,\hat{d})}{\lambda})}\right],m = 1, \dots, N.  
 \end{split}
\end{equation} 
The layer-2 codewords are performed to generate $N$ beams with focuses on the $N$ candidate UE locations, respectively. Then, the UE measures the received power of each beam and reports the beam index corresponding to the maximal received power, which indicates the estimated UE location and the optimal IRS codeword. The procedure of the proposed two-layer beam training scheme is summarized in Algorithm \ref{beam-training-procedure}.
\begin{figure*}[ht]
    \centering
    \subfigure[Distance estimation error v.s. Reference SNR.]{
        \includegraphics[scale = 0.39]{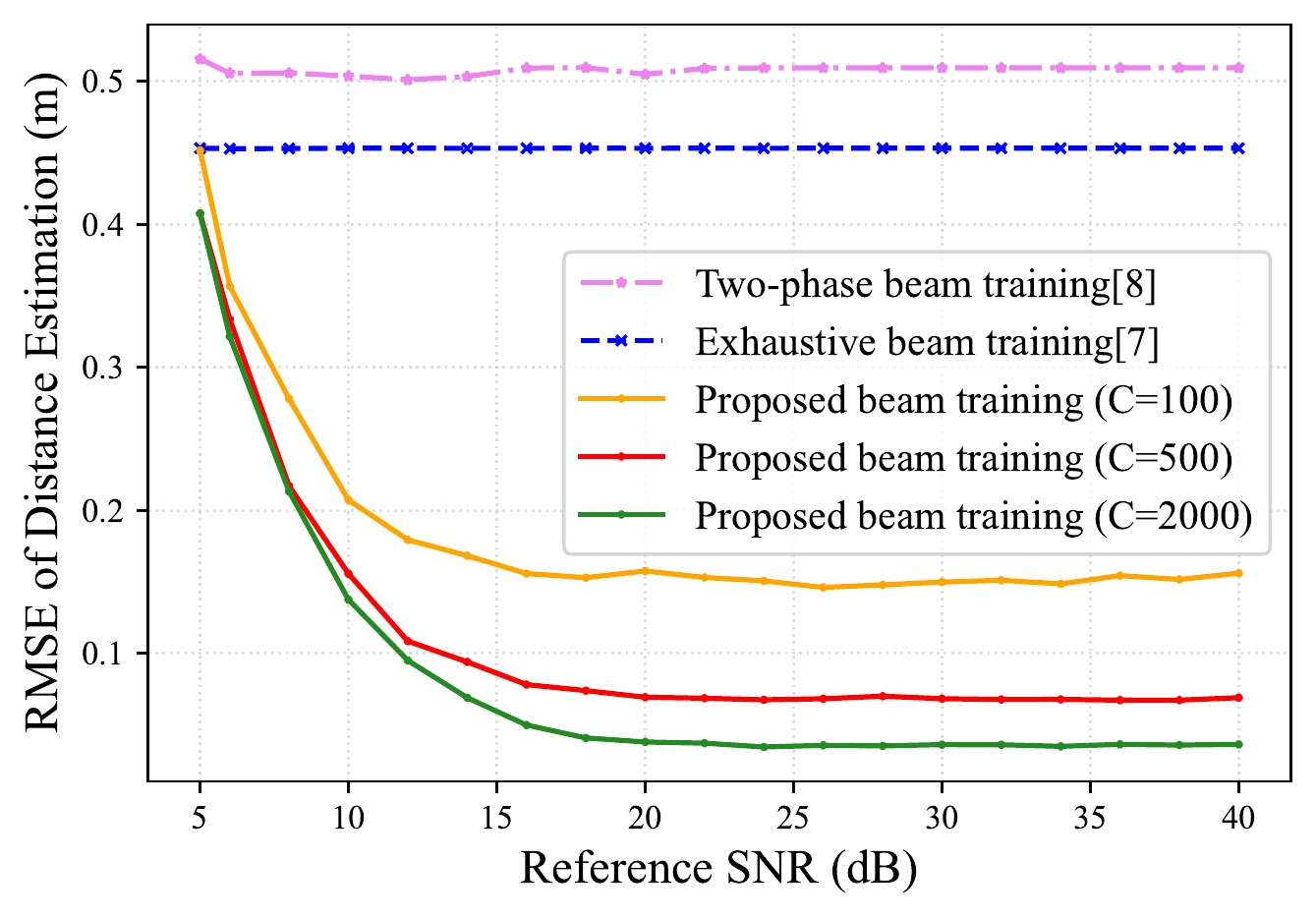}
	\vspace{-0.5em}
	\label{fig:distance-SNR}
    }
    \subfigure[Angle estimation error v.s. Reference SNR.]{
	\includegraphics[scale = 0.39]{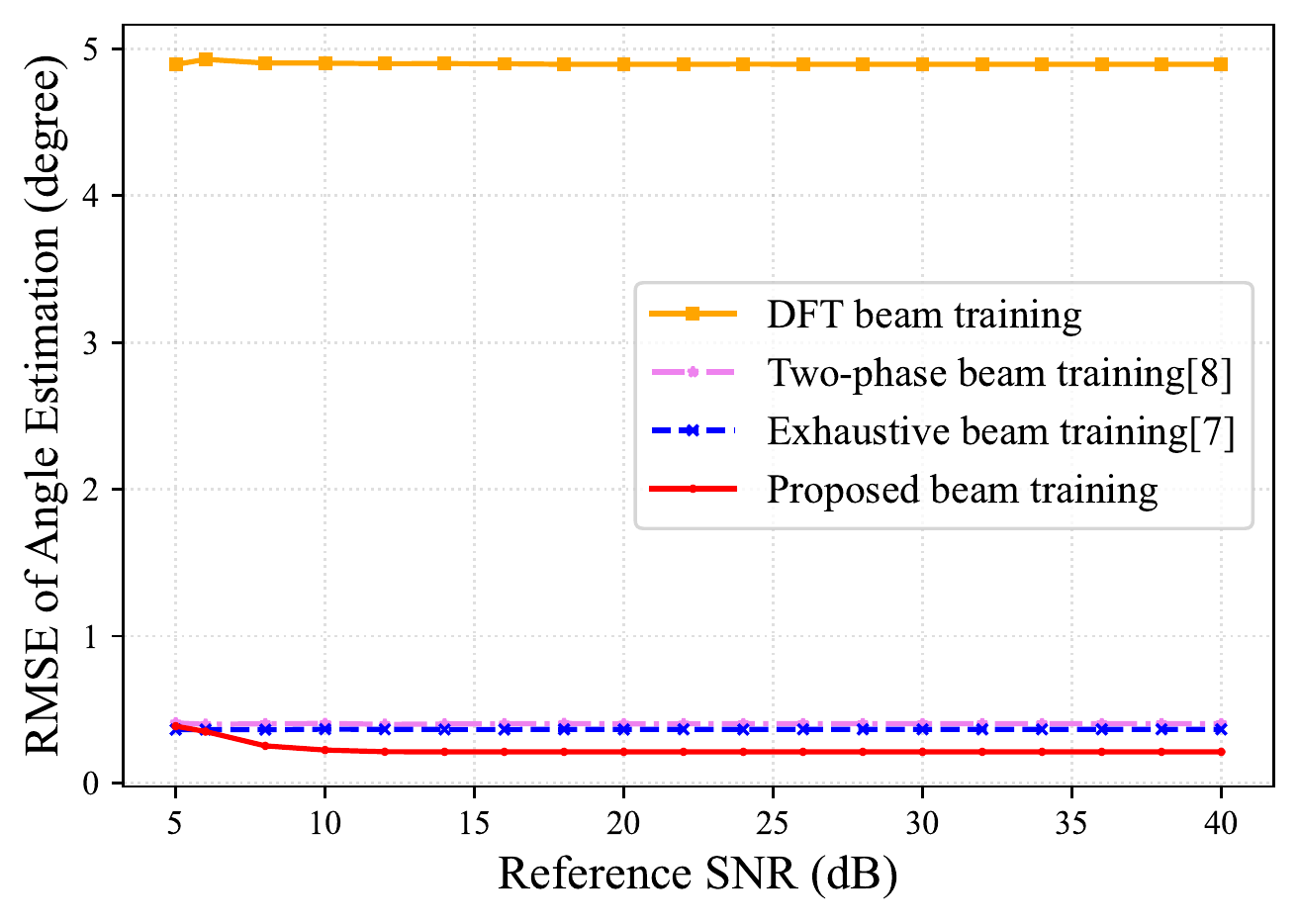}
	\vspace{-0.5em}
	\label{fig:angle-SNR}
    }
    \subfigure[Achievable rate v.s. Reference SNR.]{
    \includegraphics[scale=0.39]
    {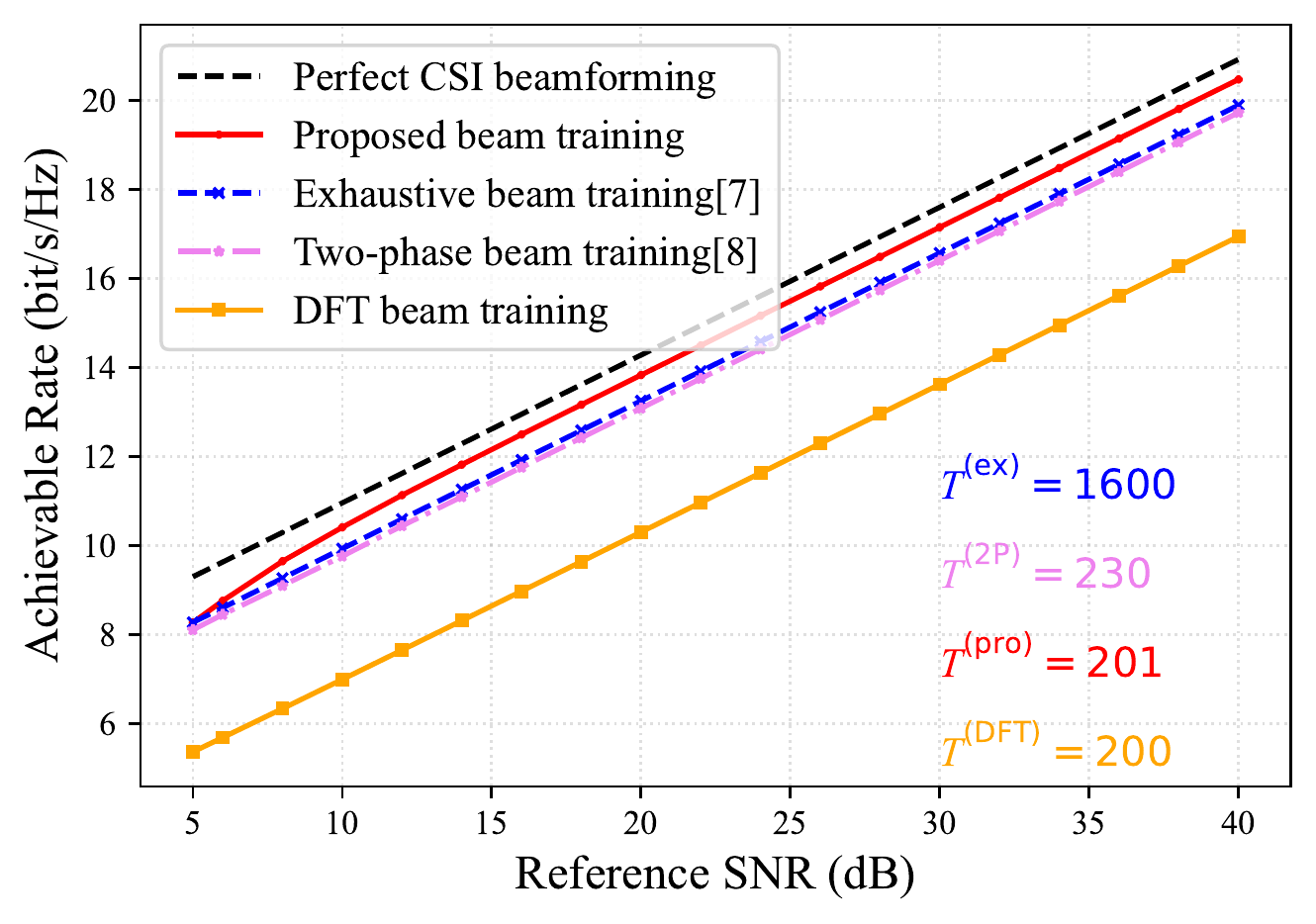}
\vspace{-0.5em}
\label{fig:rate-SNR}
    }
\quad    
    \subfigure[Distance estimation error v.s. User distance.]
    {\includegraphics[scale = 0.37]{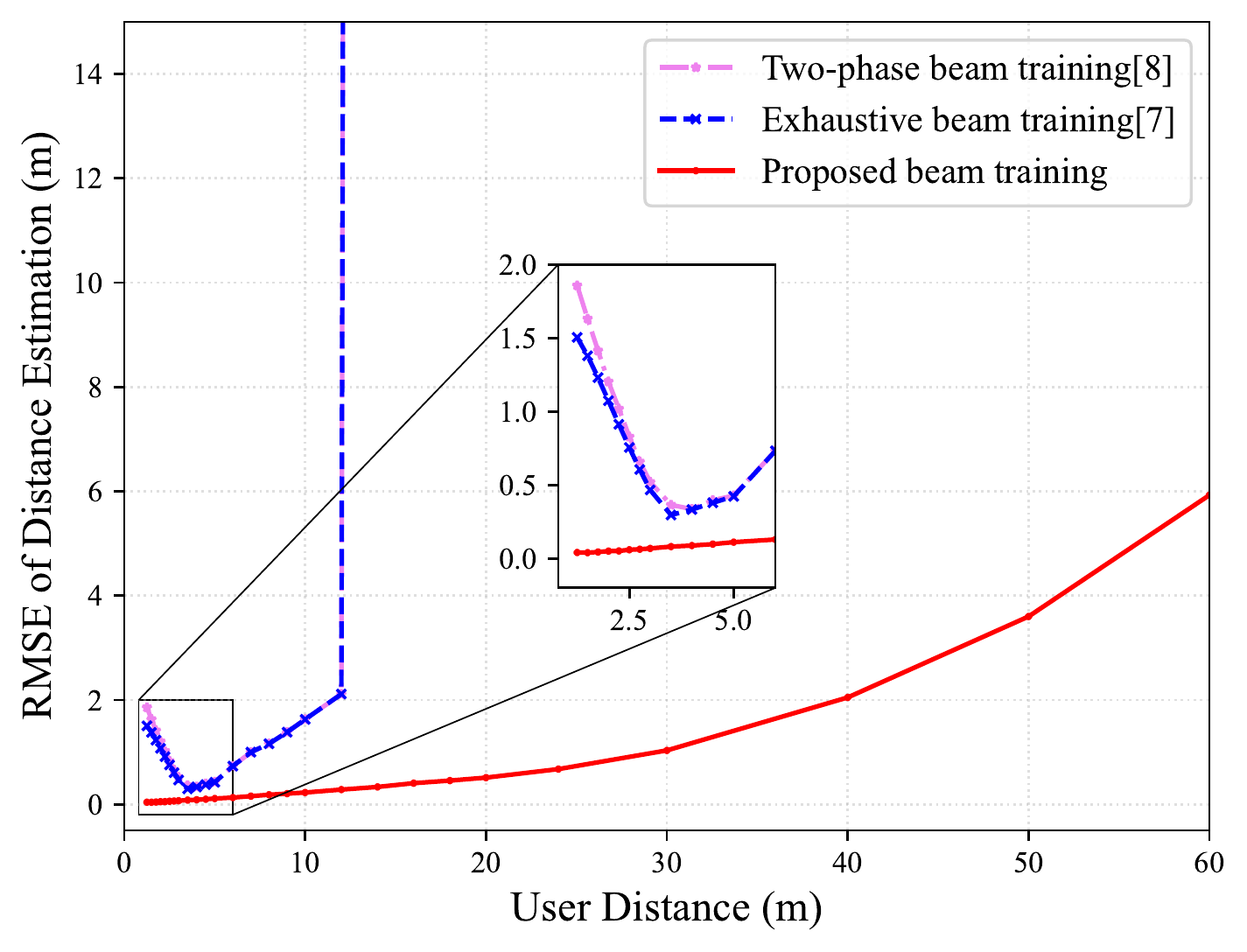}
	\vspace{-0.5em}
	\label{fig:distance-distance}
    }
    \subfigure[Angle estimation error v.s. User distance.]
    {\includegraphics[scale = 0.37]{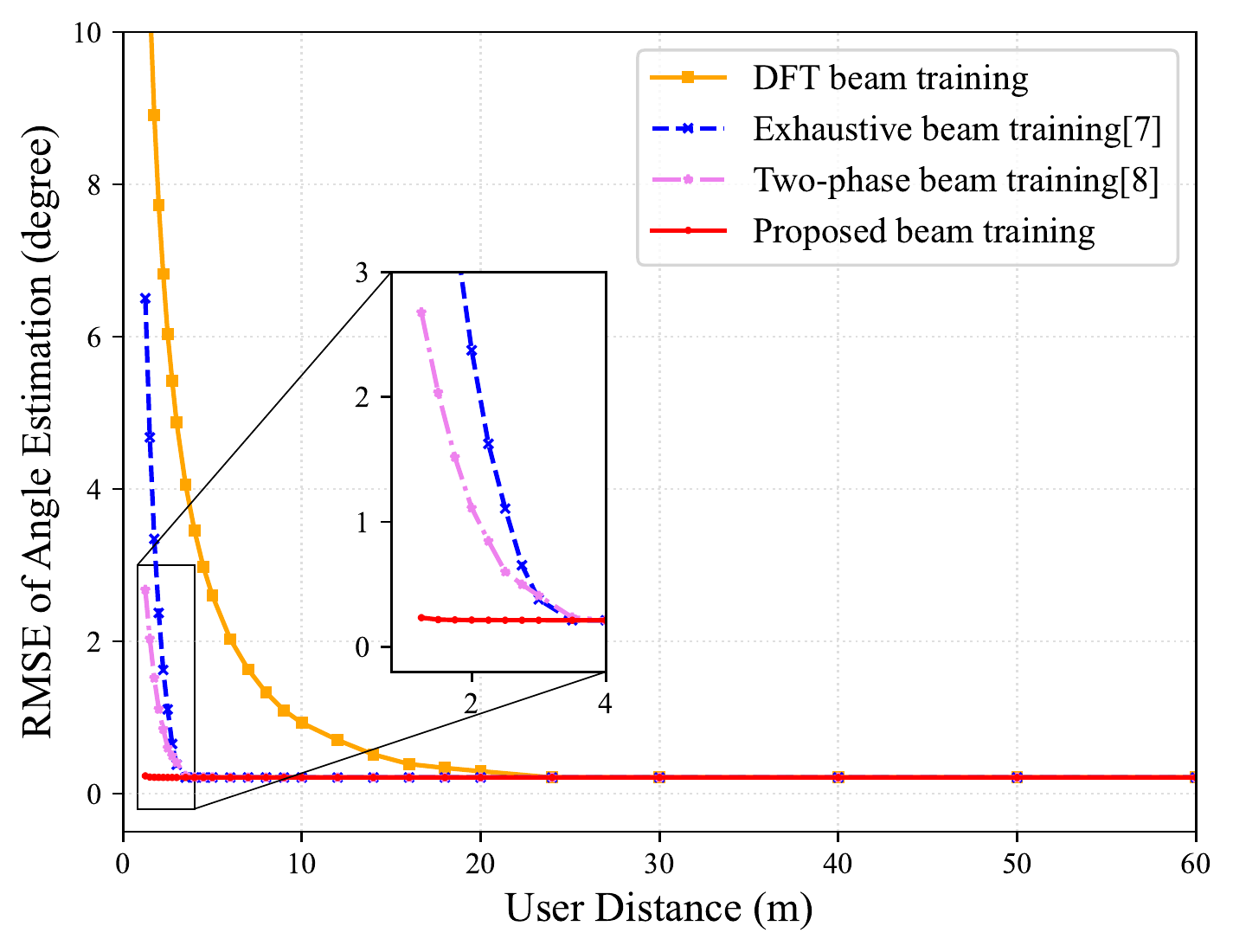}
	\vspace{-0.5em}
	\label{fig:angle-distance}
    }
    \subfigure[Achievable rate v.s. User distance.]{\includegraphics[scale=0.37]
    {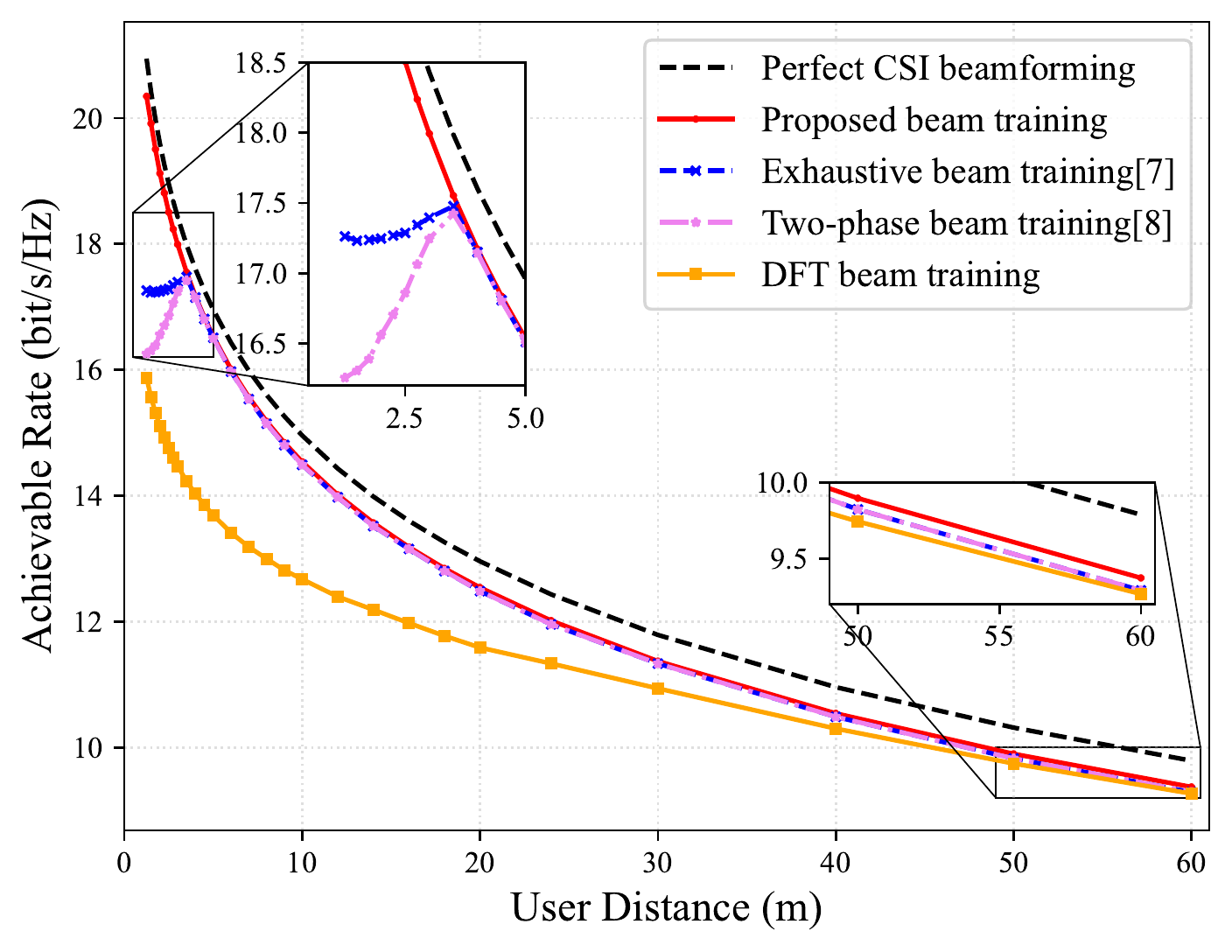}
\vspace{-0.5em}
\label{fig:rate-distance}
    }
    \caption{Performance comparison of the proposed two-layer codebook based beam training scheme with the exhaustive search based beam training scheme\cite{Channel-Estimation-for-XL-MIMO}, the two-layer beam training scheme\cite{Fast-Near-Field-Beam-Training-IRS}, and the DFT beam training scheme.}
    \vspace{-0.5em}
    \label{simulation-results}
\end{figure*}

\begin{algorithm}[ht]
	\caption{Proposed two-layer beam training procedure}
    \label{beam-training-procedure}
    \begin{algorithmic}[1] 
    \STATE \textbf{Layer 1: Distance estimation}
\STATE IRS performs the layer-1 codebook in (\ref{1_layer_codebook}) to generate 1 cumulative omnidirectional IRS beam. 
\STATE UE measures the average received power, $\Bar{P_r}$.
\STATE Obtain the distance estimation, $\hat{d}$, according to (\ref{estimation_of_l}).
    \STATE \textbf{Layer 2: Angle estimation}
    \STATE Obtain the layer-2 codebook according to (\ref{DFT-angles}) and (\ref{layer-2-codebook}).
    \STATE IRS performs the layer-2 codebook to generate $N$ sweeping beams.
    \STATE UE reports the beam index corresponding to the maximal received power.
    \STATE \textbf{Output:} The optimal IRS codeword.
    \end{algorithmic}
\end{algorithm}

\begin{remark}[Training overhead]
\label{remark3}
\rm{Since only 1 omnidirectional IRS beam is generated by the layer-1 codebook, and $N$ beams are generated by the layer-2 codebook, the overall training overhead of the proposed two-layer codebook based beam training is given by $T^{(\rm{pro})} = 1+N$, which is only 1 more than that of the DFT codebook based bean training scheme, $T^{(\rm{DFT})} = N$.
Moreover, $T^{(\rm{pro})}$ is much smaller than the training overhead of the exhaustive search based beam training scheme\cite{Channel-Estimation-for-XL-MIMO}, $T^{(\rm{ex})} = NS$, and that of the two-phase beam training scheme\cite{Fast-Near-Field-Beam-Training-IRS}, $T^{(\rm{2P})} = N+KS$, where $S$ and $K$ denote the number of candidate UE distances and the number of shortlisted candidate UE angles, respectively. To further reduce the overhead of the proposed layer-2 codebook, multi-beam and hierarchical strategy \cite{multi-beam-training-IRS} can be adopted, which is left for our future work.}
\end{remark}

\section{Numerical Results}

\begin{table}[ht]
	\centering  
	\caption{Parameter configuration}  
	\label{parameters}  
	\begin{tabular}{cc|cc}  
		        \hline  
		& & & \\[-6pt]  
		Parameters&Values&Parameters&Values \\  
		        \hline
		& \\[-6pt]  
		$N$&200&$\lambda$&0.01~m \\
		        \hline
    	& \\[-6pt]  
		$D$ & 1~m & $\sigma^2$ & -94~dBm \\
		        \hline
        & \\[-6pt]  
		$G^{\rm I},G^{\rm U}$&1&$A^{\rm U}$&$\lambda^2/(4\pi)$ \\
		        \hline
	\end{tabular}
\end{table}

A Monte-Carlo simulation is conducted to validate the effectiveness of the proposed two-layer codebook based beam training scheme. The channel model described in Section II is adopted, with parameters shown in Table \ref{parameters}. The reference SNR of the IRS assisted wireless system is defined as ${\rm SNR} = \frac{P_{\rm A}Nf^2G^{\rm I}G^{\rm U} A^{\rm U} }{4\pi d^2 \sigma^2}$, where the noise power is set as -94~dBm (corresponding to a bandwidth of 100~MHz). The achievable rate is calculated with equation (\ref{achievable-rate}). Additionally, the number of the independent channel realizations for each result point is set to 1000. Four benchmark schemes are adopted: 1) \emph{perfect-CSI based beamforming}, which provides the upper bound of rate performance; 2) \emph{exhaustive search based beam training} with $S = 10$ \cite{Channel-Estimation-for-XL-MIMO}; 3) \emph{two-phase beam training} with $K = 3, S = 10$ \cite{Fast-Near-Field-Beam-Training-IRS}; 4) \emph{DFT beam training}. 

Figs. 4(a)--4(c) show the effects of $\rm{SNR}$ on the Root Mean Squared Error~(RMSE) of distance estimation and angle estimation, and the achievable rate, where $d = 3$ m and $\theta$ is randomly distributed in [$-60^\circ$, $60^\circ$). Several observations can be made as follows. First, the RMSE of distance estimation decreases as $C$ increases. This is consistent with the conclusion in Remark 1.
Second, compared with the polar-domain beam training scheme in \cite{Channel-Estimation-for-XL-MIMO} and the two-phase beam training scheme in \cite{Fast-Near-Field-Beam-Training-IRS}, the proposed beam training scheme yields the least estimation errors of the UE distance and angle, thereby achieving better performance in terms of the achievable rate. 
Third, the training overhead of the proposed beam training scheme is significantly smaller than that of the polar-domain beam training scheme\cite{Channel-Estimation-for-XL-MIMO} and the two-phase beam training scheme\cite{Fast-Near-Field-Beam-Training-IRS} (i.e., 201 versus 1600 and 230). 
Moreover, it is observed that the proposed beam training scheme significantly outperforms the DFT codebook based beam training scheme under different SNRs, yet with a similar training overhead (i.e., 201 versus 200). 
Additionally, it is observed that the RMSE of distance estimation of the proposed beam training scheme increases in the very low SNR regime, leading to certain performance degradation. This problem can be readily solved by increasing the transmit power or the training overhead of the proposed layer-1 codebook (e.g., increasing the training overhead from 1 to 10 brings a 10~dB increase in the received signal energy).

Finally, we investigate the effects of UE distance $d$ on the RMSE of distance estimation and angle estimation, and the achievable rate in Figs. 4(d)--4(f), where $P_{\rm A}f^2$ is set to be 0.05 mW and $\theta$ is randomly distributed in [$-60^\circ$, $60^\circ$). In the range of 1-5~m, it is observed that the proposed beam training scheme provides accurate estimation of UE distance and angle, while the polar-domain beam training scheme and the two-phase beam training scheme yield significant estimation errors due to the beam broadening effect in the near-field region,
which results in an obvious decrease in achievable rate in this range. 
In the range of 5-60~m, it is observed that, on one hand, the RMSE of distance estimation of the proposed beam training scheme increases with the UE distance due to the decrease of SNR. On the other hand, the RMSE of distance estimation of the polar-domain beam training scheme and the two-phase beam training scheme increase much more rapidly because the number of distance samples is limited and the non-uniform distance sampling method results in distance estimation being always $+\infty$ in the long-distance region. 
Furthermore, the performance of the proposed beam training scheme outperforms those of the benchmarks, and is very close to perfect CSI beamforming under different UE distances, which is due to the limited angle-domain sampling step and the presence of noise.  
Last, the achievable rates of all beam training schemes monotonically decrease with the UE distance due to the more severe path-loss, and the achievable rates of all the schemes converge as the UE distance increases to satisfy the far-field assumption.




\section{Conclusion}

In this paper, we propose a novel two-layer codebook based beam training scheme to realise the near-field beam training of IRS with both low overhead and high IRS array gain. The proposed beam training scheme is inspired by the characteristics of the near-field beam patterns under near-field/far-field/random-phase beamforming, which enables efficient estimation of the distance and angle of the UE.
We analyze the effectiveness and overhead of the proposed beam training scheme to demonstrate its superiority over the existing benchmarks. Numerical results show that, compared with the benchmarks, the proposed beam training scheme provides more accurate estimation of the UE distances and angles, hence achieving close rate with the upper bound across the near-field region, yet with a smaller training overhead. Moreover, it is worth noting that the
proposed beam training scheme is general and can also be applied to IRS’s far-field region as well as BS’s transmit beam training without IRS.


\bibliographystyle{IEEEbib}
\bibliography{IEEErefs}
\end{document}